\newcommand\old[1]{{\color{gray} #1}}
\DeclareMathAlphabet{\mathpzc}{T1}{pzc}{m}{it}
\newtheorem{Proposition}{Proposition}
\pgfplotsset{compat=newest}
\newtheorem{theorem}{\textbf{Theorem}}
\newtheorem{remark}{\textbf{Remark}}
\newcounter{MYtempeqncnt}
\renewcommand\old[1]{}
\title{CDC 2018}
\author{Amir Khazraei}
\date{November 2018}
\begin{document}
\title{\LARGE \bf An Optimal Linear Dynamic Detection Method for
Replay Attack in Cyber-Physical Systems}

\author{Amir Khazraei, Hamed Kebriaei$^*$, and Farzad Rajaei Salmasi 
\thanks{This research was in part supported by a grant from the Institute for Research in Fundamental Sciences (IPM) (No. CS 1397-4-56).}
\thanks{Authors are with the School of Electrical and Computer Engineering,
 College of Engineering, University of Tehran, Tehran, Iran. H. Kebriaei is also with School of Computer Science, Institute for Research in Fundamental Sciences (IPM), P.O. Box 19395-5746, Tehran, Iran.}
\thanks{$^*$ Corresponding author: H. Kebriaei (e-mail: kebriaei@ut.ac.ir). }
 }

\maketitle \thispagestyle{empty} \pagestyle{empty}

\begin{abstract}
The problem of detecting replay attack to the linear stochastic system with Kalman filer state estimator and LQG controller is addressed. To this end, a dynamic attack detector method is proposed which is coupled with
the dynamics of the system. While preserving stability of the main system, conditions on parameters of the attack detector dynamics are obtained such that the attack can be revealed by destabilization of a residual trajectory which is the difference between the estimated and measured output of the system. Using this method, system operator can adjust the detection rate based on the proposed scheme by changing the design parameters. Nevertheless, since the exogenous attack detector signal affects the performance of the closed loop control system, we propose an optimization problem to determine such a detector with minimum loss effect. In the simulation results, the proposed dynamical attack detector approach is compared with the well-known additive white noise watermarking method and the results confirm the superiority of the new scheme. 
\end{abstract}

\maketitle

\section{Introduction}\label{sec1}

Cyber-physical systems refer to a new generation of systems which integrates computational, networking and physical processes. Security of Cyber-Physical systems from control theory point of view has gained lots of attention during the last few years [1]. Due to widespread use of computer network in control of physical systems, the malicious entities have seized an opportunity to inject attacks on cyber-physical systems [2-3]. 

The analysis over vulnerabilities of cyber-physical systems has attracted a tremendous amount of attention in past few years. For instance, false data injection attack against static state estimation in a power grid has been studied in [4]. The authors showed that if the injected signal to the sensory measurements of the system lays within the range space of the observation matrix, then the attack remains stealthy. Conditions on the existence of undetectable and unidentifiable attacks has been characterized in [5] using graph theoretic method. It is assumed that the adversary is capable of injecting the attack signal into the input and output of the system. In [6], the cyber-physical attack is modeled based on adversary resources including disruption, disclosure and information of the system. Covert misappropriation attack against a networked control system has been considered in [7]. In this type of attack, attacker needs the complete information of the system and more resources compared to other types of attack. Other standard types of attacks like Denial of Service (DoS) and deception attacks have been considered in [8-10].

The problem of resilient state estimation in a standard form has been addressed in [11-14]. They all assume that the attacker injects attack vector to less than half of sensors, since it has been shown that it is impossible to reconstruct the state of the system whenever more than half of the sensors are under attack. Reconstructing the state of a nonlinear control system subject to sensor and actuator attacks has also been studied in [15]. However, it is assumed that some set of sensors are trustworthy. But, in the problem of replay attack detection addressed in this paper, we assume that all sensors can be compromised. It should be noted that our goal is only to detect the presence of attack rather than estimating the true state of the system during the attack time.

Replay attack against a centralized control system has been introduced in [16] and [17] where for discrete time linear time invariant (LTI) control systems an optimal noisy watermarking signal (independent and identically distributed noise with zero mean and specific constant covariance) has been proposed in order to detect the attack. In a recent article of the same authors [18], a stationary process is utilized such that the defender has more degree of freedom to design the watermarking signal for a discrete time control system, which improves the detection rate with the same performance loss. The similar iid approach has been proposed in [19] while the dynamic of the system is unknown to the defender.

Detection of replay attack in frequency domain has been introduced in [20]. It is claimed that the attack can be detected without injecting watermarking signal. However, a restricting assumption is made so that a white Gaussian noise signal with enough power should exist into the communication network system to detect the replay attack. Moreover, it has been shown in [21] that by including a nonlinear term in the control system in non-regular time intervals, the attack can be detected.  But, the performance loss induced by incorporating such a nonlinear term into the system is not analyzed. The problem of replay attack against multi agent systems has been studied in [22] and a method based on sharing white Gaussian watermarking signal between agents has been proposed. The authors also initiated the idea of dynamic attack detector signal for the replay attack to the linear continuous time systems in [23].

In this paper, we introduce a new method based on injection of dynamical attack detector signal to detect the replay attack against discrete time LTI stochastic systems. In the replay attack, it is assumed that there exists an adversary who is capable of recording the sensory measurements for a period of time, injecting a disruptive signal into the system and altering the real sensory measurements of the system with the recorded measurements. It is also assumed that the defender has the full information of the dynamic of the system. We provide the conditions on the design parameters of the attack detector signal so that the replay attack is detected while the system  remains stable in the absent of attack. 
Our approach yields the following advantages and contributions with respect to the literature

\begin{enumerate}
 \item The proposed method in [16] and [17] are based on injection of the white and colored noise to the system, respectively that render a \textit{constant} probability of attack detection. But in our method, the attack detector signal is generated based on a \textit{dynamic} that uses output estimation error. The probability of attack detection in our method raises as the time increases and the attack can be detected in a shorter period of time.
 \item We provide the condition under which the attack can be revealed using the proposed  attack detector signal (Theorem 1). It is also proved that by applying the proposed attack detector signal, it is always possible for physical systems to meet such condition (Theorem 2, proposition 3 and Remark 7). 

\item The detection rate of replay attack can be adjusted by design parameters in our method and an optimization problem is proposed so that the performance loss of the system after injection of the attack detector signal is minimized subject to detection rate constraint.
\item  The performance of the proposed method is assessed from two aspects
  
 \begin{itemize}
\item 	It has been shown in simulation results that our proposed method is more useful when defender wants to detect the attack with greater level of accuracy, since within a finite time, our method can render more number of alarms in an arbitrary window time.
\item 	Since we do not inject the noise directly to the system, the control signal in our proposed attack detector has smaller range and lower energy with respect to the control signal in white Gaussian noise method for the same detection time and the same desired number of alarms.
 \end{itemize}

\end{enumerate}

\section{PRELIMINARIES}\label{sec:PRELIMINARIES}

\subsection{System Model}
Consider a control system, described by the following time invariant linear system\\
\begin{equation}\label{eq(1)}
\begin{array}{l}
x(t + 1) = Ax(t) + Bu(t) + w(t)\\
y(t) = Cx(t) + v(t)
\end{array}\end{equation}
where $x \in {\mathbb{R}^n}$, $u \in {\mathbb{R}^m}$, $y \in {\mathbb{R}^p}$ are the state, input and output vectors, respectively. Moreover, $w \in {\mathbb{R}^n}$ and $v \in {\mathbb{R}^p}$ are the process and measurement noises which are assumed to be independent white Gaussian processes with covariance matrices $Q$ and $R$, respectively. Initial condition $x(0)$ is a Gaussian random variable with mean $\mu$ and covariance $\Sigma(0)$ (i.e. $x(0) \sim \mathcal{N}\big(\mu,\,\Sigma(0)\big)$) is independent from process and measurement noises. 
\subsection{Controller Structure}
We assume that the pair $(A,B)$ is controllable and $(A,C)$ is observable. Therefore, one can design a LQG controller that minimizes the following cost
\begin{equation}\label{eq{2}}
J = \mathop {\lim }\limits_{N \to \infty } \frac{1}{N}\sum\limits_{t = 1}^N {E\left\{ {{x^T}(t)Wx(t) + {u^T}(t)Uu(t)} \right\}}\end{equation}
where $W$ and $U$ are semi positive definite and positive definite weight matrices with suitable dimensions, respectively.
For this system, state estimation unit based on the steady state Kalman filter is in the following form
\begin{equation}
\begin{split}
\hat x(t) &= \hat x(t|t - 1) + L\big(y(t) - C\hat x(t|t - 1)\big)\\
\hat x(t + 1|t) &= A\hat x(t) + Bu(t)
\end{split}
\end{equation}
In which $L = {\Sigma _e}{C^T}{(C{\Sigma _e}{C^T} + R)^{ - 1}}$ and ${\Sigma _e}$ is stabilizing solution of the following Riccati equation
\begin{equation}
{\Sigma _e} = A{\Sigma _e}{A^T} - A{\Sigma _e}{C^T}{(C{\Sigma _e}{C^T} + R)^{ - 1}}C{\Sigma _e}{A^T} + Q
\end{equation}                   
and the control input to the system is the LQR gain multiplied by the estimated state of the system as follows
\begin{equation}
u(t) = K\hat x(t)
\end{equation}
where $K =  - {({B^T}PB + U)^{ - 1}}{B^T}PA$  and $P$ is obtained from the following Riccati equation
\begin{equation}
P = {A^T}PA + W - {A^T}PB{({B^T}PB + U)^{ - 1}}{B^T}PA
\end{equation}                                          

\section{PROBLEM FORMULATION}
\subsection{Attack Model}
We consider there exists an adversary that disrupts the control system (\ref{eq(1)}) using replay attack. A replay attack model is defined in [16], where the attacker targets sensory measurements of the system and meanwhile injects a destructive signal into the system.
 The attack procedure is performed in two phases
\begin{enumerate}
 \item During the first phase, from time $-\tau$ to $-1$, attacker records the sensory measurements.
  \item In the second phase, from time zero to $\tau-1$, attacker replaces the sensory measurements of the system with the recorded ones, as follows
 
 \begin{equation*}
 y(t) = y(t - \tau )\,\,\,\,\,\,for\,\,0 \le t \le \tau-1
 \end{equation*}
 In addition, during the second phase of the attack, the attacker injects a disruptive signal ${f}(t)$ through the external input channel $B^a$ to the system as follows
 
 \begin{equation}\label{eq{98}}
 {\begin{array}{*{20}{c}}
 {x(t + 1) = Ax(t) + Bu(t) + B^{a}f(t)}\\
 {y(t) = y(t - \tau ) = {y^\tau }(t)\,\,}
 \end{array}}
 \end{equation} 
\end{enumerate}
 
The signal $f(t)$ is chosen by the attacker and this information is not commonly available to the defender. It is worth mentioning that attacker does not need to know about the model of the system. Having access to disclosure resources in output channel and disruption resources in input and output channels is sufficient to perform the replay attack.\\
We define $r(t) = y(t) - C\hat x(t|t - 1)$ as the residue signal of the system. When the system works on normal condition, the residue signal is an innovation sequence with zero mean and covariance $H = C{\Sigma _e}{C^T} + R$ [24].
Let  define the following weighted norm of the residual signal
\begin{equation}\label{eq{12}}
g(t) = {r^T}(t){{H}^{ - 1}}r(t)
\end{equation}                                                              
which has the Chi-square distribution form. It should be considered that the defender utilizes $g(t)$ to detect the presence of abnormality in the system, by comparing $g(t)$ with a scalar $\eta$ as a threshold which is determined by defender. Anomaly detector will warn the system via an alarm if $g(t) > \eta $, while the system is considered in its normal operation if $g(t) < \eta $. When the system is in its normal condition, we define $\alpha  = P(g(t) > \eta )$ as the false alarm rate [25]. To reduce the number of false alarm rates in a period of time, usually $\eta$ is adjusted in a way that $\alpha$ becomes close to zero.\\
 It is also assumed that there exists a window time $\mathscr{T}$ that defender counts the number of alarms in the interval $(t-\mathscr{T},t)$ at each time step $t$. If the number of alarms in a window time exceeds a threshold (denoted by ${\beta}$) which is determined by the defender, it is considered that the attack is occurred at time $t$. Since the expected number of false alarms in a window time is equal to $\alpha \mathscr{T}$, in order to avoid false attack detection in normal condition of the system, the desired number of alarms in a window time should be chosen enough larger than $\alpha \mathscr{T}$ (i.e. $\alpha \mathscr{T}\ll\beta$). Moreover, we define the detection time as the required time to reach into the desired number of alarms ${\beta}$. \\ 
\textbf{Notation}: In rest of the paper the superscript $\tau$ shows the delayed signal. For instance\\
${x^\tau }(t) = x(t - \tau )$,\\
and superscript $a$ shows the under attack system's signals. Therefore, from (\ref{eq{98}}) we get $y^a(t)=y^\tau(t)$.
 
\begin{remark}
 The amount of $\tau$ is determined by the attacker, and as much as it becomes larger, the attacker will get enough time to disrupt the system and also the defender will have enough time to detect the attack. However, in the case that $\tau$ is too small, we may not be able to detect the attack for a given $\beta$. In fact, the attack is detected if the detection time be lower than $\tau$ for a given $\beta$.
\end{remark}
The system from $ - \tau $ to zero can be described in the following form
\begin{equation}
\begin{split}
{x^\tau }(t + 1) &= A{x^\tau }(t) + B{u^\tau }(t) + {w^\tau }(t)\\
{y^\tau }(t) &= C{x^\tau }(t) + {\upsilon ^\tau }(t)
\end{split}
\end{equation}
where $u^\tau (t) = K{\hat x^\tau }(t)$ and the state estimator of the system in the absence of attack is as follow
\begin{equation}\label{eq{15}}
{{\hat x}^\tau }(t + 1|t) = A{{\hat x}^\tau }(t) + B{u^\tau }(t)
\end{equation}
\begin{equation}\label{eq{16}}
{{\hat x}^\tau }(t) = {{\hat x}^\tau }(t|t - 1) + L[{y^\tau }(t) - C{{\hat x}^\tau }(t|t - 1)]
\end{equation}
When the system is under attack, the state equation and its estimation are as follow
\begin{equation}
\begin{array}{l}
{x^a}(t + 1) = {\rm A}{x^a}(t) + {\rm B}{u^a}(t) + B^af(t) + {w^a}(t)
\end{array}
\end{equation}
\begin{equation}\label{eq{18}}
{{\hat x}^a}(t + 1|t) = A{{\hat x}^a}(t) + B{u^a}(t)
\end{equation}
\begin{equation}\label{eq{19}}
{{\hat x}^a}(t) = {{\hat x}^a}(t|t - 1) + L[{y^a}(t) - C{{\hat x}^a}(t|t - 1)]
\end{equation}
By subtracting (\ref{eq{15}}) from (\ref{eq{18}}), we get
\begin{equation}\label{eq{20}}
\begin{array}{l}
{{\hat x}^a}(t + 1|t) - {{\hat x}^\tau }(t + 1|t) = {A}({{\hat x}^a}(t) - {{\hat x}^\tau }(t)) + \\
 + {B}({u^a}(t) - {u^\tau }(t)) = (A + BK)(I - LC)({{\hat x}^a}(t|t - 1)\\
 - {{\hat x}^\tau }(t|t - 1)) + (A + BK)L({y^a}(t) - {y^\tau }(t))
\end{array}
\end{equation}
Since ${y^a}(t) = {y^\tau }(t)$, (\ref{eq{20}}) can be written as
\begin{equation}
\begin{array}{l}
{{\hat x}^a}(t + 1|t) - {{\hat x}^\tau }(t + 1|t) = ({A} + BK)(I - LC)\\
({{\hat x}^a}(t|t - 1) - {{\hat x}^\tau }(t|t - 1))
\end{array}
\end{equation}
Let define $\Gamma  \buildrel \Delta \over = ({A} + BK)(I - LC)$. We know that ${A}+BK$ and $A(I-LC)$ are Schur individually, but the stability of $({A} + BK)(I - LC)$ cannot be guaranteed. The residue signal of the system under attack can be obtained as
\begin{equation}\label{eq{22}}
\begin{array}{l}
{r^a}(t) = {y^a}(t) - C{{\hat x}^a}(t|t - 1) \\ ={y^\tau }(t) - C({{\hat x}^\tau }(t|t - 1) + \Gamma {}^t\left( {{{\hat x}^a}(0|- 1) - {{\hat x}^\tau }(0|-1)} \right)\,) \\ = {r^\tau }(t) - 
C\Gamma {}^t\left( {{{\hat x}^a}(0| - 1) - {{\hat x}^\tau }(0|-1)} \right)\,
\end{array}
\end{equation}
 If $\Gamma$ is Schur, then the second term in (\ref{eq{22}}) will vanish for large enough $t$, leading to converges of ${r^a}(t)$  to ${r^\tau}(t)$. Thus, we get
\begin{equation}\label{eq{66}}
\begin{array}{l}
\mathop {\lim }\limits_{t \to \infty } P({g^a}(t) > \eta ) = \mathop {\lim }\limits_{t \to \infty } P({r^a}^T(t){{H}^{ - 1}}{r^a}(t) > \eta ) \\
 = \mathop {\lim }\limits_{t \to \infty } P({r^\tau }^T(t){{H}^{ - 1}}{r^\tau }(t) > \eta ) = \alpha 
\end{array}
\end{equation}
According to (\ref{eq{66}}), in the case the matrix $\Gamma$ is stable, the available transient time for convergence of $r^a (t)$ to $r^\tau (t)$ is not reliable for detection of the attack. Thus, the number of alarms occurred in a window time during the attack can converge quickly to the number of alarms in the normal condition of the system. As a result, it is probable that attack remains stealthy in the transient time and this motivates the need for designing an attack detector method, when $\Gamma$ is stable. However, if $\Gamma$ is unstable, then both ${\hat x^a}$ and ${r^a}(t)$ diverge to infinity and  it is straightforward to show that the attack will get detected.

Note that from (12) to (22), both the state estimation and anomaly detector units use counterfeit output measurement and not the real output of the system. Consequently, $f(t)$ has no effect on the state estimation and anomaly detector units. 

\section{ATTACK DETECTOR SIGNAL DYNAMICS}
In the previous section, we studied that in the case matrix $\Gamma$ is stable, the replay attack remains undetected. 
Here, we propose a method for detection of the replay 
attack which causes the residue signal to become unstable. Using this method in a finite time, the most achievable number of alarms with arbitrary size can be obtained for a given window time $\mathscr{T}$. Moreover, the detection rate can be controlled by the defender and the design parameters can be adjusted to minimize the performance loss caused by including the attack detector signal.\\ 
Let us assume that using the accessible information of model and input-output data of the system, defender generates attack detector signal $\xi$ according to the following process
\begin{align}\label{eq{24}}
\zeta (t + 1) =& \tilde {\rm A}\zeta (t) + \tilde {\rm M}(y(t) - C\hat x(t))\\ \nonumber
\xi (t) =& \tilde {\rm K}\zeta (t)
\end{align}
where the matrices $\tilde {\rm A} \in {\mathbb{R}^{n \times n}}$, $\tilde {\rm M} \in {\mathbb{R}^{n \times p}}$ and $\tilde {\rm K} \in {\mathbb{R}^{m \times n}}$ are determined by the defender as design parameters. Fig. 1 shows a schematic of under attack control system with the proposed attack detector generation unit. Moreover, we assume that the defender uses the following input signal that is composed of control signal and generated attack detector signal
\begin{equation}\label{eq{2222}}
u(t) = K\hat x(t) + \xi (t)
\end{equation}
After injecting the attack detector signal into the system, the first requirement is that the stability of the free of attack system is guaranteed.
\begin{remark}
We assume that the system uses standard LQG controller to regulate the origin in an optimal manner. Therefore, the gain of control signal and observer parameters cannot be
changed for the sake of attack detection.
\end{remark}

\begin{remark}
The attacker can cancel out the effect of attack detector signal if he or she knows the exact model of the actual system and the structure of attack detector dynamic. Moreover, it should have access to input-output measurement, and the estimated state of the system to construct the negative of this signal. However, clearly having access to such information is not straightforward for the attacker. Moreover, we assume that the  controller unit is secure and the attacker is not able to make the attack detector signal zero.
\end{remark}
\begin{figure}[ht] 
\centerline{\includegraphics[width=8cm]{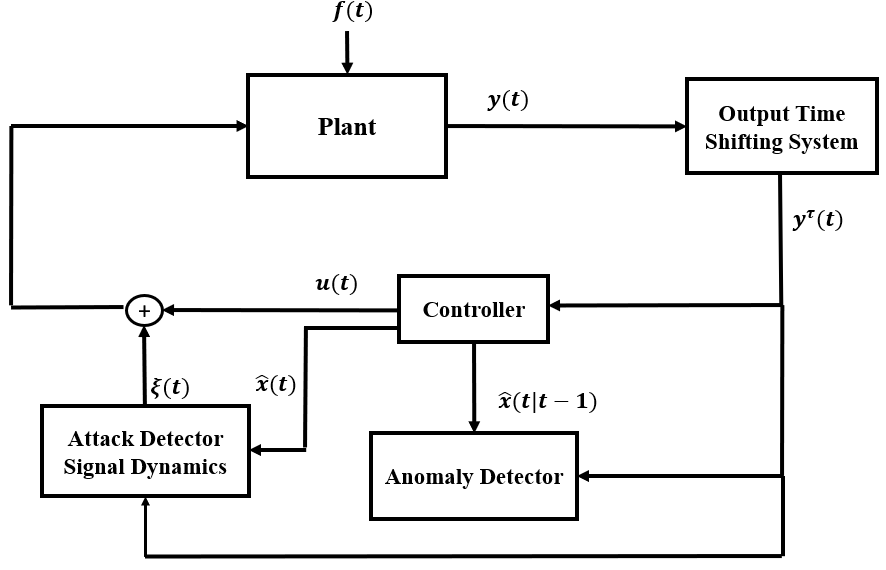}}
\caption{\small A schematic of under attack control system with a dynamical attack detector signal generation unit. Attacker replays the previous output measurements into the system. Output of the system and estimated state are used for generating attack detector signal.}
\end{figure}

\begin{Proposition}
The stability of the system (\ref{eq(1)}) after applying the control input (\ref{eq{2222}}) can be guaranteed, if and only if the matrix  $\tilde {\rm A}$ is Schur.
\end{Proposition}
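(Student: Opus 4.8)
The plan is to recast the closed loop obtained under the augmented control law \eqref{eq{2222}} as a single autonomous linear recursion driven by noise, and to reduce the stability question to a spectral condition on its state matrix. Because the process and measurement noises in \eqref{eq(1)} enter only as additive forcing, mean-square stability (bounded state covariance) holds if and only if this state matrix is Schur. The whole argument then rests on choosing coordinates in which that matrix becomes block triangular, so that its spectrum splits into pieces we already control.

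The decisive step is the change of variables. Besides the one-step prediction $\hat x(t|t-1)$ and the detector state $\zeta(t)$, I would use the prediction error $e(t) = x(t) - \hat x(t|t-1)$ in place of the raw state $x(t)$. A direct computation from \eqref{eq(1)} and the Kalman recursion gives $e(t+1) = A(I-LC)e(t) + w(t) - ALv(t)$, so $e$ evolves autonomously and is unaffected by $\hat x(t|t-1)$ and $\zeta$ — this is the LQG separation structure. One also checks from \eqref{eq{24}} that the detector input reduces to $y(t) - C\hat x(t) = (I-CL)r(t)$ with $r(t) = Ce(t)+v(t)$, so $\zeta$ is forced by $e$ alone through $\tilde{\mathrm M}(I-CL)(Ce(t)+v(t))$, while $\hat x(t+1|t) = (A+BK)\hat x(t|t-1) + B\tilde{\mathrm K}\zeta(t) + (A+BK)LCe(t)$ up to noise.

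Ordering the augmented state as $\big(\hat x(t|t-1),\,\zeta(t),\,e(t)\big)$, the homogeneous state matrix is block upper triangular,
\[
\begin{bmatrix}
A+BK & B\tilde{\mathrm K} & (A+BK)LC\\
0 & \tilde{\mathrm A} & \tilde{\mathrm M}(I-CL)C\\
0 & 0 & A(I-LC)
\end{bmatrix},
\]
since neither $\zeta$ nor $e$ feeds back into the rows above it. Its spectrum is therefore the union of the spectra of the diagonal blocks $A+BK$, $\tilde{\mathrm A}$ and $A(I-LC)$. By the LQG design, $A+BK$ is Schur (the LQR closed loop) and $A(I-LC)$ is Schur (the Kalman filter), so the full matrix is Schur exactly when $\tilde{\mathrm A}$ is Schur. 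This gives both implications: if $\tilde{\mathrm A}$ is Schur every diagonal block is, and the closed loop is stable; conversely, triangularity guarantees that any eigenvalue of $\tilde{\mathrm A}$ on or outside the unit circle is also an eigenvalue of the full matrix, so the system fails to be stable.

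I expect the effort here to be bookkeeping rather than conceptual. The one genuine insight is selecting the coordinates that expose the triangular form; after that, the main thing to verify carefully is that the two blocks below the diagonal are truly zero, i.e. that the error recursion is completely decoupled from $\hat x(t|t-1)$ and $\zeta$, and that the detector row carries no $\hat x(t|t-1)$ term. A secondary point worth stating explicitly is the equivalence between the Schur property of this matrix and the stability notion used for the stochastic plant, so that ``$\tilde{\mathrm A}$ Schur'' is exactly necessary and sufficient.
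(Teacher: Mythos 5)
Your proposal is correct and follows essentially the same route as the paper: augment the closed loop into a block upper triangular state matrix whose diagonal blocks are $A+BK$, $\tilde{\rm A}$, and the Kalman error dynamics, then read off the spectrum as the union of the blocks' spectra. The only difference is cosmetic — you use the prediction error and predicted state where the paper uses the filtering error $e(t)=x(t)-\hat x(t)$ and the plant state, so your error block is $A(I-LC)$ instead of the paper's $(I-LC)A$, which has the same eigenvalues.
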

\begin{proof}
The stability of the system can be verified by augmenting the dynamic of the system, estimator error and attack detector signal as follows 
\begin{equation}\label{eq{26}}
\tilde x(t + 1) = \Theta \,\tilde x(t) + \Phi \,(t)
\end{equation}     
where $e(t)=x(t)-\hat{x}(t)$ and  \\

$\begin{array}{l}
\tilde x(t) \buildrel \Delta \over = \left[ {\begin{array}{*{20}{c}}
{x(t)}\\
{\zeta (t)}\\
{e(t)}
\end{array}} \right],\\
\Theta  \buildrel \Delta \over = \,\left[ {\begin{array}{*{20}{c}}
{A + BK}&{B\tilde {\rm K}}&{ - BK}\\
0&{\tilde {\rm A}}&{\tilde {\rm M}C}\\
0&0&{(I - LC)A}
\end{array}} \right],\\
\Phi (t) \buildrel \Delta \over = \,\left[ {\begin{array}{*{20}{c}}
{w(t)}\\
{\tilde {\rm M}\upsilon (t)}\\
{(I - LC)w(t) - L\upsilon (t + 1)}
\end{array}} \right]\,
\end{array}$\\

Since the matrix $\Theta $ is upper triangular, $A + BK$ and $(I - LC)A$ are Schur therefore, the matrix $\Theta$ is Schur if and only if the matrix $\tilde{\rm A}$ is Schur. 
\end{proof}

\begin{Proposition}
	Injection of the attack detector signal has no effect on the residue signal of the system.
\end{Proposition}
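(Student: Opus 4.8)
The plan is to prove that the residue $r(t)=y(t)-C\hat x(t|t-1)$ is a function only of the Kalman estimation error and the process/measurement noises, and that this error evolves autonomously --- decoupled from both the state $x(t)$ and the detector state $\zeta(t)$. Since the attack detector signal enters the loop solely through $u(t)=K\hat x(t)+\xi(t)$ with $\xi(t)=\tilde{\rm K}\zeta(t)$, establishing this decoupling immediately yields that $\xi$ cannot appear in $r(t)$.

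First I would exploit the augmented dynamics (\ref{eq{26}}) already obtained in Proposition 1. The estimation-error row of $\Theta$ equals $[\,0\ \ 0\ \ (I-LC)A\,]$, and the corresponding entry of $\Phi(t)$ is $(I-LC)w(t)-Lv(t+1)$, which involves only the noises. Hence the error obeys the autonomous recursion
\begin{equation*}
e(t+1)=(I-LC)A\,e(t)+(I-LC)w(t)-Lv(t+1),
\end{equation*}
so that $e(t)$ is completely independent of $x(t)$ and $\zeta(t)$, and in particular of the detector signal $\xi(t)=\tilde{\rm K}\zeta(t)$.

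It then remains to write the residue in terms of $e$. Using $y(t)=Cx(t)+v(t)$, we have $r(t)=C\big(x(t)-\hat x(t|t-1)\big)+v(t)$. Subtracting the predictor $\hat x(t|t-1)=A\hat x(t-1)+Bu(t-1)$ from the state equation $x(t)=Ax(t-1)+Bu(t-1)+w(t-1)$, the control term $Bu(t-1)=B\big(K\hat x(t-1)+\xi(t-1)\big)$ --- including the attack detector signal --- cancels identically, leaving $x(t)-\hat x(t|t-1)=Ae(t-1)+w(t-1)$. Substituting gives
\begin{equation*}
r(t)=CA\,e(t-1)+Cw(t-1)+v(t),
\end{equation*}
which depends only on $e(t-1)$ and the noises. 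By the previous step $e(t-1)$ does not depend on $\xi$, so neither does $r(t)$, and consequently neither do $g(t)$ nor the resulting alarm statistics.

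I do not expect a genuine obstacle here, since this is essentially the well-known control-invariance of the Kalman innovation sequence. The only points requiring care are to make the cancellation of $Bu(t-1)$ explicit, and to note that the estimation-error recursion above already incorporates the measurement update $\hat x(t)=\hat x(t|t-1)+Lr(t)$, so that no dependence on $\xi$ can re-enter through the filtered estimate.
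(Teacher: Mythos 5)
Your proof is correct and follows essentially the same route as the paper's: both arguments hinge on the cancellation of $Bu$ when subtracting the predictor from the state equation, yielding an autonomous error recursion (the paper works with the prediction error $\bar e(t)=x(t)-\hat x(t|t-1)$ and writes $r(t)=C\bar e(t)+v(t)$, while you work with the filtered error $e(t)=x(t)-\hat x(t)$ via the block-triangular $\Theta$ from Proposition 1 and write $r(t)=CAe(t-1)+Cw(t-1)+v(t)$; the two are equivalent). No gap.
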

\begin{proof}
	We have assumed the control signal as $u(t)=K\hat{x}(t)+\xi(t)$. Now, let's define $\bar{e}(t)\buildrel \Delta \over =x(t)-\hat x(t|t - 1)$. Then
	\begin{align}
	\bar e(t + 1) &= x(t + 1) - \hat x(t + 1|t) \nonumber\\ 
	&=Ax(t) + Bu(t)+ w(t) - A\hat x(t) - Bu(t) \nonumber\\
	&=A(x-\hat{x}(t))+w(t) \nonumber\\
	&=A\left( {x(t) - \left( {I - LC} \right)\hat x(t|t - 1) - Ly(t)} \right) + w(t) \nonumber\\
	&=A\left ( {I - LC} \right)\bar e(t) + ALv(t) + w(t)
	\end{align}
	and also we have
	\begin{align}
	r(t)&= y(t)-C\hat{x}(t|t-1)= \nonumber\\ 
	&=C\left(x(t)-\hat{x}(t|t-1)\right)+v(t)=C\bar{e}(t)+v(t)
	\end{align}

	Since $A(I-LC)$ is Schur, the expectation of $r(t)$ will be equal to zero in steady state and also its covariance will be bounded. It can be seen that the attack detector signal in control input has no effect on the residue signal in the absence of attack because $u(t)$ is a known signal and it disappears in the subtraction of $\hat{x}(t+1)$ from $x(t+1)$ 
	
\end{proof}

\subsection{Detection of replay attack using the proposed attack detector method}
In this section, we want to analyze the problem of attack detection using the proposed dynamic attack detector. State estimation dynamic of the under attack system after injection of the attack detector signal is obtained as follows
\begin{align}\label{eq{27}}
{{\hat x}^a}(t + 1|t) =& {\rm A}{{\hat x}^a}(t) + {\rm B}(K{{\hat x}^a}(t) + {\xi ^a}(t)) \\ \nonumber
 =& (A + BK)\left[ {(I - LC){{\hat x}^a}(t|t - 1) + L{y^a}(t)} \right] \\ \nonumber
&+B\tilde {\rm K}{\zeta ^a}(t)
\end{align}
Moreover, the state estimation of the attack-free system is given by
\begin{align}\label{eq{28}}
{{\hat x}^\tau }(t + 1|t)=&{\rm A}{{\hat x}^\tau }(t) + {\rm B}(K{{\hat x}^\tau }(t) + {\xi ^\tau }(t)) \\ \nonumber
 =& (A + BK)\left[ {(I - LC){{\hat x}^\tau }(t|t - 1) + L{y^\tau }(t)} \right] \\\nonumber
  &+B\tilde {\rm K}{\zeta ^\tau }(t)
\end{align}
Rewriting (\ref{eq{24}}) for both under attack and attack free conditions, the difference of the new equations is as follows
\begin{equation}\label{eq{29}}
\begin{array}{l}
{\zeta ^a}(t + 1) - {\zeta ^\tau }(t + 1) = \tilde {\rm A}({\zeta ^a}(t) - {\zeta ^\tau }(t)) \\
-\tilde {\rm M}C(I - LC)\left[ {{{\hat x}^a}(t|t - 1) - {{\hat x}^\tau }(t|t - 1))} \right]
\end{array}
\end{equation}
Also subtracting (\ref{eq{28}}) from (\ref{eq{27}}) we get
\begin{equation}\label{eq{30}}
\begin{array}{l}
{{\hat x}^a}(t + 1|t) - {{\hat x}^\tau }(t + 1|t) = \Gamma ({{\hat x}^a}(t|t - 1) - {{\hat x}^\tau }(t|t - 1)) \\
 + B\tilde {\rm K}({\zeta ^a}(t) - {\zeta ^\tau }(t))
\end{array}
\end{equation}
where we used the fact that ${y^a}(t) = {y^\tau }(t)$. By defining 
\begin{equation}\label{eq{31}}
\sigma (t) = \left[ {\begin{array}{*{20}{c}}
{{\sigma _1}(t)}\\
{{\sigma _2}(t)}
\end{array}} \right] \buildrel \Delta \over = \left[ {\begin{array}{*{20}{c}}
{{{\hat x}^a}(t|t - 1) - {{\hat x}^\tau }(t|t - 1)}\\
{{\zeta ^a}(t) - {\zeta ^\tau }(t)}
\end{array}} \right]
\end{equation}
 (\ref{eq{29}}) and  (\ref{eq{30}}) can be represented as the following state space model
\begin{equation}\label{eq{32}}
\sigma (t + 1) = \underbrace {\left[ {\begin{array}{*{20}{c}}
\Gamma &{B\tilde {\rm K}}\\
{ - \tilde {\rm M}C(I - LC)}&{\tilde {\rm A}}
\end{array}} \right]}_\Delta \sigma (t)
\end{equation}
in which $\Delta$ stands for the attack detection matrix. In what follows, we study the condition that replay attack using the proposed attack detector signal can be detected.
\begin{theorem}
If the matrix $\Delta$ has an eigenvalue with modulus greater than one, then the attack will be detected.
\end{theorem}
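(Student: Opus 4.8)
The plan is to show that an eigenvalue of $\Delta$ outside the unit circle forces the under-attack residual $r^a(t)$ to grow without bound, which makes the detection statistic $g^a(t)$ exceed the threshold $\eta$ at every step after some finite time, so that the alarm count in any window of length $\mathscr{T}$ saturates and surpasses $\beta$; by definition this declares the attack.

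First I would relate the quantity actually monitored by the anomaly detector to the state $\sigma(t)$ of the recursion (\ref{eq{32}}). Since $r(t)=y(t)-C\hat x(t|t-1)$ and the replayed measurements obey $y^a(t)=y^\tau(t)$, subtracting the attack-free residual from the under-attack residual gives
\begin{equation}
r^a(t)-r^\tau(t)=-C\big(\hat x^a(t|t-1)-\hat x^\tau(t|t-1)\big)=-C\sigma_1(t),
\end{equation}
where $\sigma_1$ is the first block of $\sigma$ in (\ref{eq{31}}). Because in the attack-free case $r^\tau(t)$ is a bounded, zero-mean stationary innovation process, the asymptotic size of $r^a(t)$ is dictated entirely by $C\sigma_1(t)$.

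Next I would solve (\ref{eq{32}}) explicitly as $\sigma(t)=\Delta^{t}\sigma(0)$ and exploit the spectral hypothesis. Let $\lambda$ with $|\lambda|>1$ be the unstable eigenvalue and $v=[\,v_1^{T}\ \ v_2^{T}\,]^{T}$ a corresponding eigenvector, partitioned conformably with $\sigma$. The crucial observation is that this mode must excite the first block: if $v_1=0$, the eigen-relation $\Delta v=\lambda v$ reduces to $B\tilde K v_2=0$ together with $\tilde A v_2=\lambda v_2$, which would make $\lambda$ an eigenvalue of $\tilde A$ with $|\lambda|>1$, contradicting the requirement of Proposition 1 that $\tilde A$ be Schur for closed-loop stability. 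Hence $v_1\neq 0$, and whenever the initial mismatch $\sigma(0)$ has a nonzero projection onto the unstable eigenspace, $\|\sigma_1(t)\|$ grows like $|\lambda|^{t}$. Consequently $\|C\sigma_1(t)\|\to\infty$, so $\|r^a(t)\|\to\infty$ and $g^a(t)=r^a(t)^{T}H^{-1}r^a(t)\to\infty$.

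To finish, since $g^a(t)\to\infty$ there exists a finite $T_0$ such that $g^a(t)>\eta$ for all $t>T_0$; therefore for every $t>T_0+\mathscr{T}$ each instant in the window $(t-\mathscr{T},t)$ raises an alarm, the alarm count saturates at the window length $\mathscr{T}$, and since $\beta$ is chosen well below $\mathscr{T}$ the threshold is exceeded and the attack is declared. The main obstacle is the non-degeneracy needed in the middle step: the clean geometric divergence requires both that $\sigma(0)$ have a nonzero component along the unstable eigenspace and that $C$ not annihilate the growing direction of $\sigma_1$. I would treat the first as a generic property of the estimation mismatch at attack onset, and the second by invoking detectability of the mode through the output map $C$ (equivalently, that the pair $(\Delta,[\,C\ \ 0\,])$ has no unstable unobservable mode); absent such a condition the statement would have to be phrased for generic data, and this is the point I would scrutinize most carefully in the authors' argument.
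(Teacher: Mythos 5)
Your proof follows essentially the same route as the paper's: the same identity $r^a(t)=r^\tau(t)-C\sigma_1(t)$, the same modal solution of the recursion (\ref{eq{32}}), and the same contradiction argument showing that an unstable eigenvector of $\Delta$ cannot have zero upper block because that would make $\lambda$ an eigenvalue of the Schur matrix $\tilde{\rm A}$. The paper disposes of your first non-degeneracy worry exactly as you suggest, by noting that $\sigma(0)$ comes from a stochastic (continuous) distribution so the projection onto the unstable eigenvector is almost surely nonzero. Your second worry --- that $C$ might annihilate the growing direction $v_1$ --- is the one point where you are more careful than the authors: the paper only establishes that $\sigma_1(t)$ is unbounded, while detection actually requires $C\sigma_1(t)$ to be unbounded, and it silently passes over this. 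The gap can, however, be closed rather than assumed away: if $Cv_1=0$ then $C(I-LC)v_1=(I-CL)Cv_1=0$, so the lower block row of $\Delta v=\lambda v$ reduces to $\tilde{\rm A}v_2=\lambda v_2$, forcing $v_2=0$ since $\tilde{\rm A}$ is Schur; the upper block row then gives $\Gamma v_1=\lambda v_1$ with $|\lambda|>1$, which is impossible in the case of interest where $\Gamma$ is Schur (and when $\Gamma$ is unstable the detector is not needed in the first place). So your observability caveat is legitimate criticism of the paper's argument, but the conclusion survives without adding a generic-data hypothesis.
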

\begin{proof}
the residue signal of the under attack system is
\begin{equation}
\begin{split}
{r^a}(t) =& {y^a}(t) - C{{\hat x}^a}(t\left| {t - 1} \right.) = {y^\tau }(t) \\
&- C({\sigma _1}(t) + {{\hat x}^\tau }(t\left| {t - 1} \right))= {r^\tau }(t) - C{\sigma _1}(t)
\end{split}
\end{equation}
Rewriting (\ref{eq{12}}) for under attack condition we get
\begin{equation}\label{eq{34}}
\begin{split}
{g^a}(t) =& {r^a}^T(t){H^{ - 1}}{r^a}(t) = {\left( {{r^\tau }(t) - C{\sigma _1}(t)} \right)^T}{H^{ - 1}} \\
 &\times \left( {{r^\tau }(t) - C{\sigma _1}(t)} \right) = {r^\tau }^T(t){H^{ - 1}}{r^\tau }(t) \\
 &- \sigma _1^T(t){C^T}{H^{ - 1}}{r^\tau }(t) - {r^\tau }^T(t){H^{ - 1}}C{\sigma _1}(t) \\
&+ \sigma _1^T(t){C^T}{H^{ - 1}}C{\sigma _1}(t)
\end{split}
\end{equation}
It is straightforward to show that for unstable $\sigma _1(t)$ we have $\mathop {\lim }\limits_{t \to \infty } P({g^a}(t) > \eta )=1 $ and therefore, the attack will be detected. Now we show that if the matrix $\Delta$ has at least one eigenvalue outside the unit circle, then ${\sigma _1}(t)$ will be unstable. Assuming the matrix $\Delta$ has $n$ distinct eigenvalues, the solution of the system (\ref{eq{32}}) can be obtained as
\begin{equation}
\sigma (t) = {c_1}\lambda _1^t{p_1} + ... + {c_n}\lambda _n^t{p_n}
\end{equation}
where ${p_i}$ is the $i$-th right eigenvector of $\Delta$ and ${c_i}$ is a constant equal to the inner product of the $i$-th left eigenvector and initial condition $\sigma (0)$. Assume that ${\lambda _i}$ is an eigenvalue outside the unit circle. It is clear that if at least one of the upper $n$ rows of the vector ${p_i}$ is nonzero, then ${\sigma _1}(t)$ will be unstable (note that since $\sigma (0)$ is taken from a stochastic distribution, ${c_i}$ is almost surely a nonzero constant). Now, assume for the sake of contradiction the case that the $n$ upper  rows of the vector ${p_i}$ are zero. According to eigenvalue definition for ${\lambda _i}$ we have
\begin{equation}
\left[ {\begin{array}{*{20}{c}}
\Gamma &{B\tilde {\rm K}}\\
{ - \tilde {\rm M}C(I - LC)}&{\tilde {\rm A}}
\end{array}} \right]\left[ {\begin{array}{*{20}{c}}
0\\
{{p_{i,2}}}
\end{array}} \right] = {\lambda _i}\left[ {\begin{array}{*{20}{c}}
0\\
{{p_{i,2}}}
\end{array}} \right]
\end{equation}
where ${p_{i,2}}$ represents the lower $n$ rows of the vector ${p_i}$. From the above equation we have
\begin{equation}\label{eq{38}}
\tilde {\rm A}{p_{i,2}} = {\lambda _i}{p_{i,2}}
\end{equation}
From (\ref{eq{38}}), it can be observed that ${\lambda _i}$ and ${p_{i,2}}$ are the eigenvalue and eigenvector associated with the matrix $\tilde {\rm A}$, respectively. But it contradicts to the fact that all the eigenvalues of $\tilde {\rm A}$ are inside the unit circle. Therefore, we can conclude that the upper $n$ rows of eigenvector ${p_i}$  associated with the unstable eigenvalue cannot be all equal to zero.
\end{proof}

In [17], the detectability of attack has been verified when the matrix $\Gamma=(A+BK)(I-LC)$ is unstable. In our method, by adding the attack detector dynamic as expanding the dynamics of the underlying system, the matrix $\Gamma$ is extended to matrix $\Delta$. In addition to increasing probability of attack detection in the proposed dynamic attack detection method, the extended matrix $\Delta$ contains some designed parameters which can be used to optimize the performance of detection method.

\begin{remark}
Even if we have an eigenvalue with some multiplicity, there exists at least one independent eigenvector associated with that eigenvalue.  Therefore, the discussion on the independent eigenvectors which is presented in proof of Theorem 1 can be applied since, we do not need to use the other generalized eigenvectors for this purpose and the proof remains unchanged.
\end{remark}

\begin{remark}
In theorem 1, from $\mathop {\lim }\limits_{t \to \infty } P({g}(t) > \eta )=1$, it can be deduced that in infinite time the anomaly detector gives the most achievable number of alarms $\beta_{max}$ (continues alarm) in a given window time, nevertheless, it is sufficient that the number of alarms exceeds a threshold to detect the attack. Thus, if this threshold is less than $\beta_{max}$, the attack can be revealed in a finite time.
\end{remark}

In the following theorem we show that according to the degrees of freedom on choosing the matrices $\tilde {\rm M}$, $\tilde {\rm K}$ and $\tilde {\rm A}$, the defender can make the matrix $\Delta$ unstable.
\begin{theorem}
If the matrix $\Gamma$ is full rank, by adjusting the matrices $\tilde {\rm M}$, $\tilde {\rm K}$ and stable $\tilde {\rm A}$, the matrix $\Delta$ will have at least an eigenvalue with modulus greater than one.
\end{theorem}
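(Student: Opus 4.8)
The plan is to control the \emph{product} of the eigenvalues of $\Delta$, namely $\det(\Delta)=\prod_{i=1}^{2n}\lambda_i(\Delta)$, rather than placing individual eigenvalues. If the design parameters can be chosen so that $|\det(\Delta)|>1$, then not every eigenvalue can lie in the closed unit disk, so at least one satisfies $|\lambda_i|>1$ and Theorem~1 applies. This converts the eigenvalue-placement claim into a single scalar inequality that I can attack directly.

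First I would compute $\det(\Delta)$ in closed form. Since $\Gamma$ is full rank (hence invertible) by hypothesis, the $(1,1)$ block of $\Delta$ is invertible and the Schur-complement determinant formula gives
\begin{equation*}
\det(\Delta)=\det(\Gamma)\,\det\!\big(\tilde{\rm A}+\tilde{\rm M}\,C(I-LC)\Gamma^{-1}B\,\tilde{\rm K}\big).
\end{equation*}
Because full rank of $\Gamma=(A+BK)(I-LC)$ forces both factors to be invertible, I can write $\Gamma^{-1}=(I-LC)^{-1}(A+BK)^{-1}$, so the inner ``gain'' matrix collapses to $\Xi:=C(I-LC)\Gamma^{-1}B=C(A+BK)^{-1}B\in\R^{p\times m}$, yielding
\begin{equation*}
\det(\Delta)=\det(\Gamma)\,\det\!\big(\tilde{\rm A}+\tilde{\rm M}\,\Xi\,\tilde{\rm K}\big).
\end{equation*}

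Next I would fix any invertible Schur matrix $\tilde{\rm A}$ (for instance $\tilde{\rm A}=\tfrac12 I$, which is stable as required) and use the freedom in $\tilde{\rm M},\tilde{\rm K}$ to blow up the second determinant. Taking rank-one choices $\tilde{\rm M}=a\phi^{T}$ and $\tilde{\rm K}=\psi b^{T}$ gives $\tilde{\rm M}\Xi\tilde{\rm K}=(\phi^{T}\Xi\psi)\,ab^{T}$, and the matrix determinant lemma yields
\begin{equation*}
\det\!\big(\tilde{\rm A}+s(\phi^{T}\Xi\psi)ab^{T}\big)=\det(\tilde{\rm A})\big(1+s(\phi^{T}\Xi\psi)\,b^{T}\tilde{\rm A}^{-1}a\big),
\end{equation*}
which grows linearly in the scaling $s$ provided the scalar $(\phi^{T}\Xi\psi)(b^{T}\tilde{\rm A}^{-1}a)\neq 0$. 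Hence $|\det(\Delta)|\to\infty$ as $s\to\infty$, and for $s$ large enough $|\det(\Delta)|>1$, producing the desired unstable eigenvalue.

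The whole argument therefore hinges on one genericity fact: that $\Xi=C(A+BK)^{-1}B$ is \emph{not} the zero matrix, so that $\phi,\psi$ with $\phi^{T}\Xi\psi\neq0$ exist. This is the step where the full-rank hypothesis on $\Gamma$ is genuinely needed (it is what makes $A+BK$ invertible and $\Xi$ well defined), and I expect it to be the main obstacle: controllability of $(A,B)$ and observability of $(A,C)$ alone do not force $\Xi\neq0$, since degenerate feedback gains can make $C(A+BK)^{-1}B$ vanish, so establishing $\Xi\neq0$ must rely on the specific LQG structure of $K$ and $L$. If such a degenerate configuration with $\Xi=0$ were to survive, the determinant test would be blind to the low-rank perturbation and a finer argument would be required, e.g.\ perturbing the individual eigenvalues of the block-triangular limit $\tilde{\rm M}=\tilde{\rm K}=0$ or tracking the characteristic-polynomial coefficients of $\Delta$ directly. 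By contrast, the determinant computation and the scaling step above are entirely routine.
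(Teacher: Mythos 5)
Your proposal follows essentially the same route as the paper's own proof: reduce the claim to $|\det(\Delta)|>1$, apply the block (Schur-complement) determinant identity $\det(\Delta)=\det(\Gamma)\det\big(\tilde{\rm A}+\tilde{\rm M}C(I-LC)\Gamma^{-1}B\tilde{\rm K}\big)$, and then inflate the second factor with a rank-one choice of $\tilde{\rm M}$ and $\tilde{\rm K}$ scaled arbitrarily large; the paper does the rank-one step via single nonzero entries and a cofactor expansion along the $i$-th row, which is exactly your matrix-determinant-lemma computation in coordinates. The one point worth noting is the genericity issue you flag: your argument needs $\Xi=C(I-LC)\Gamma^{-1}B=C(A+BK)^{-1}B\neq 0$, and the paper needs the identical fact — it asserts that the single nonzero entry of $\tilde{\rm M}C(I-LC)\Gamma^{-1}B\tilde{\rm K}$ "can be chosen in a desired way," which presupposes that some entry of $\Xi$ is nonzero, and it offers no justification (controllability of $(A,B)$ and observability of $(A,C)$ do not by themselves rule out $C(A+BK)^{-1}B=0$, i.e.\ a closed-loop transmission zero at the origin in every input--output channel). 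So your proposal is not missing anything relative to the published argument; you have simply made explicit the one assumption the paper leaves implicit, and closing it would require an additional argument in both proofs.
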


\begin{proof}
We know that the product of all eigenvalues of a matrix is equal to the determinant of that matrix. If $\det (\Delta ) > 1$, then there exists at least one eigenvalue of $\Delta$ that is outside of the unit circle. In what follows, we prove that if the matrix $\Gamma$ is full rank, by adjusting the design matrices $\tilde {\rm M}$, $\tilde {\rm K}$ and $\tilde {\rm A}$ one can make the determinant of the matrix $\Delta$ greater than one. We know that if $\det (\Gamma ) \ne 0$, we have
\begin{align}
\begin{array}{l}
\det \left( {\left[ {\begin{array}{*{20}{c}}
\Gamma &{B\tilde {\rm K}}\\
{-\tilde {\rm M}C(I - LC)}&{\tilde {\rm A}}
\end{array}} \right]} \right) = \det \left( \Gamma  \right) \\
 \times \det \left( {\tilde {\rm A} + \tilde {\rm M}C(I - LC){\Gamma ^{ - 1}}B\tilde {\rm K}} \right)
\end{array}
\end{align}
Since $\Gamma$ is Schur, the absolute of its determinant is lower than one. Assume that the element $(i,j)$ of $\tilde {\rm M}$ and the element $(r,s)$ of $\tilde {\rm K}$ get two arbitrary values $m$ and $k$, respectively, and the other elements of $\tilde {\rm M}$ and $\tilde {\rm K}$ are set to zero. It is obvious that only $i$-th row of $\tilde {\rm M}C(I - LC)$ and $s$-th column of $B\tilde {\rm K}$ remain nonzero. Therefore, the matrix $\tilde {\rm M}C(I - LC){\Gamma ^{ - 1}}B\tilde {\rm K}$ will have only one nonzero element in its $(i,s)$ entry, that can be chosen in a desired way. The determinant of $\tilde{\rm A}$ along the $i$-th row appears as a linear combination of the determinants of the minors $M_{ij}$, which cannot be all equal to zero if $\tilde{\rm A}$ is chosen an invertible matrix. Let ${a_{is}}$ be the nonzero element associated with such a nonzero determinant, therefore we have
\begin{align}
\det (\tilde{\rm A} + \tilde{\rm M}C(I - LC){\Gamma ^{ - 1}}B\tilde K) = {\mu _1} \\ \nonumber
+{\mu _2}mk\det ({M_{is}})
\end{align}
where $\mu _1$ and $\mu _2$ are constants. Then, one can choose $m$ and $k$ in a way that $det(\Delta)$ is greater than one. 
\end{proof}
\begin{Proposition}
The matrix $\Gamma$ is full rank if and only if the matrix $A$ is full rank.
\end{Proposition}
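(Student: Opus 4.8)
The plan is to exploit that $\Gamma$ is a square $n\times n$ matrix, so "full rank" is equivalent to $\det(\Gamma)\neq 0$, and that the determinant factorizes as $\det(\Gamma)=\det(A+BK)\,\det(I-LC)$. I would then establish separately that (i) $\det(A+BK)$ equals $\det(A)$ up to a strictly nonzero factor, and (ii) $\det(I-LC)$ is always strictly nonzero. Combining these two facts immediately yields $\det(\Gamma)\neq0\iff\det(A)\neq0$, which is the claim.

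For step (i), I would substitute the LQR gain $K=-(B^TPB+U)^{-1}B^TPA$ to factor $A$ out on the right:
\begin{equation*}
A+BK=\big(I-B(B^TPB+U)^{-1}B^TP\big)A=:MA,
\end{equation*}
so that $\det(A+BK)=\det(M)\det(A)$. It then remains to show $\det(M)\neq0$. Here I would invoke the Sylvester determinant identity $\det(I-XY)=\det(I-YX)$ to pass from the $n\times n$ matrix to an $m\times m$ one:
\begin{equation*}
\det(M)=\det\big(I-(B^TPB+U)^{-1}B^TPB\big)=\frac{\det(U)}{\det(B^TPB+U)}.
\end{equation*}
Since $U$ is positive definite and $B^TPB+U$ is positive definite, both determinants are strictly positive, hence $\det(M)>0$ and $\det(A+BK)\neq0\iff\det(A)\neq0$.

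For step (ii), I would apply the same identity to $I-LC$, pushing the computation into the $p\times p$ output space. Writing $H=C\Sigma_e C^T+R$ and $L=\Sigma_e C^T H^{-1}$, so that $CL=(H-R)H^{-1}$,
\begin{equation*}
\det(I-LC)=\det(I-CL)=\det\big(R H^{-1}\big)=\frac{\det(R)}{\det(H)},
\end{equation*}
which is strictly positive because the measurement-noise covariance $R$ is positive definite (and $H$ is positive definite as well). Thus $I-LC$ is invertible regardless of $A$.

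Putting the pieces together, $\det(\Gamma)=\det(M)\,\det(I-LC)\,\det(A)$ with the first two factors strictly nonzero, giving the desired equivalence. The main obstacle—indeed the only place where anything beyond bookkeeping is required—is establishing that the two auxiliary factors $\det(M)$ and $\det(I-LC)$ never vanish; the Sylvester identity is the key device that reduces each to a ratio of determinants of manifestly positive-definite matrices. It is worth flagging that the reverse ("only if") direction genuinely relies on $R$ being positive definite: if $R$ were singular then $I-LC$ would be singular and $\Gamma$ could fail to be full rank even for an invertible $A$.
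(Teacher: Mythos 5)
Your proof is correct and follows essentially the same route as the paper's: factor $\det(\Gamma)=\det(A+BK)\det(I-LC)$, apply the identity $\det(I-XY)=\det(I-YX)$ to each factor, and reduce both auxiliary determinants to ratios of determinants of positive-definite matrices ($\det(U)/\det(B^TPB+U)$ and $\det(R)/\det(C\Sigma_e C^T+R)$). Your closing observation that the ``only if'' direction hinges on $R$ being positive definite is a worthwhile remark that the paper leaves implicit.
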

\begin{proof}
We know that the multiplication of two full rank matrices is a full rank matrix and vice versa. Therefore, the matrix $\Gamma$ is full rank if and only if $({A} + BK)$ and $(I - LC)$ are full rank. First we show the matrix $(I-LC)$ is always full rank. Also we know that the following equality holds for any $M \in {\mathbb{R}^{p \times k}}$ and $N \in {\mathbb{R}^{k \times p}}$ 
\begin{equation}
\det ({I_{p \times p}} - MN) = \det ({I_{k \times k}} - NM)
\end{equation}
Therefore, using the fact that $L = {\Sigma _e}{C^T}{(C{\Sigma _e}{C^T} + R)^{ - 1}}$, $\det ({I_{n \times n}} - LC)$ can be written as
\begin{equation}
\begin{array}{l}
\det \left( {{I_{n \times n}} - LC} \right) = \det \left( {{I_{p \times p}} - CL} \right) = \\
\det \left( {I - C{\Sigma _e}{C^T}{{\left( {C{\Sigma _e}{C^T} + R} \right)}^{ - 1}}} \right)=\\
\det \left( {\left( {C{\Sigma _e}{C^T} + R - C{\Sigma _e}{C^T}} \right){{\left( {C{\Sigma _e}{C^T} + R} \right)}^{ - 1}}} \right) = \\
\det \left( {R{{\left( {C{\Sigma _e}{C^T} + R} \right)}^{ - 1}}} \right) = \det \left( R \right) \\
 \times \det \left( {{{\left( {C{\Sigma _e}{C^T} + R} \right)}^{ - 1}}} \right)
\end{array}
\end{equation}
Since the matrix $R$ is full rank and positive definite, and also the matrix $C{\Sigma _e}{C^T}$ is a positive semi-definite matrix, therefore, $\det ({I_{n \times n}} - LC)>0$.\\
For the matrix $A+BK$, we can follow the same procedure. $\det (A+BK)$ can be written in the following form
\begin{equation}
\begin{split}
&\det({A + BK}) = \det \left( {A - B{{\left( {{B^T}PB + U} \right)}^{ - 1}}{B^T}PA} \right) \\
&=\det (A)\det \left( {{I_{n \times n}} - B{{\left( {{B^T}PB + U} \right)}^{ - 1}}{B^T}P} \right) \\
&=\det (A)\det \left( {{I_{m \times m}} - {B^T}PB{{\left( {{B^T}PB + U} \right)}^{ - 1}}} \right)= \\
&\det (A)\det \bigg( {\Big( {{B^T}PB + U - {B^T}PB} \Big)
{{\Big( {{B^T}PB + U} \Big)}^{ - 1}}} \bigg)\\
&=\det (A)\det \left( {U{{\left( {{B^T}PB + U} \right)}^{ - 1}}} \right) = \det (A)\det (U) \times \\
&\,\,\,\,\,\,\,\det \left( {{{\left( {{B^T}PB + U} \right)}^{ - 1}}} \right)
\end{split}
\end{equation}
Since the matrix $U$ is positive definite and also the ${B^T} PB$ is positive semi definite, therefore, $\det (U)\det \left( {{{\left( {{B^T}PB + U} \right)}^{ - 1}}} \right)>0$. As a result, $A+BK$ is full rank if and only if the matrix $A$ is full rank and the proof is completed. 
\end{proof}

\begin{remark}
The cyber-physical systems which are under study of this paper are continuous time in reality. It is straightforward to show that the matrix $A$ which can be obtained after discretization of any continuous time dynamic is full rank and therefore, the condition in Proposition 3 is always satisfied.
\end{remark}
Now, among all the feasible values for design matrices $\tilde {\rm M}$, $\tilde {\rm K}$ and stable $\tilde {\rm A}$, we seek for those with minimum
performance loss effect in the absence of attack. In what follows we obtain the performance loss of the
system after injection of the proposed attack detector signal.
\subsection{Performance Loss}
Here, we follow the procedure for obtaining the performance loss (new cost function) after injection of the attack detector signal.\\ 
The input of the system after inserting attack detector signal is as follows
\begin{equation}
u(t) = K\hat x(t) + \xi (t) = K\hat x(t) + \tilde {\rm K}\zeta (t)
\end{equation}
The performance loss is defined as
\begin{equation}\label{eq{4224}}
\tilde J = \mathop {\lim }\limits_{N \to \infty } \frac{1}{N}\sum\limits_{t = 1}^{N} {E\left\{ {{x^T}(t)Wx(t) + {u^T}(t)Uu(t)} \right\}}  
\end{equation}
Therefore, after injection of the attack detector signal to the system, the performance loss can be written as
\begin{equation}
\begin{split}
\tilde J =& \mathop {\lim }\limits_{N \to \infty } \frac{1}{N}\sum\limits_{t = 1}^{N} E\bigg\{ {x^T}(t)Wx(t) + \Big(K\big(x(t) - e(t)\big) \\
 &+ \tilde {\rm K}\zeta (t)\Big)^TU \Big(K\big(x(t) - e(t)\big) + \tilde {\rm K}\zeta (t)\Big)\bigg\}  =\\
=& \mathop {\lim }\limits_{N \to \infty } \frac{1}{N}\sum\limits_{t = 1}^{N} {E\bigg\{ {{\left[ {\begin{array}{*{20}{c}}
 {{x^T}(t)}&{{\zeta ^T}(t)}&{{e^T}(t)}
 \end{array}} \right]}}}\\
 =&\underbrace{\begin{bmatrix}
     W + {K^T}UK & {{K^T}U\tilde {\rm K}} & { - {K^T}UK}\\
     {\tilde {\rm K}^TUK} & {\tilde {\rm K}^TU\tilde {\rm K}} & { - \tilde {\rm K}^TUK}\\{ - {K^T}UK}&{ - {K^T}U\tilde {\rm K}}&{{K^T}UK}
   \end{bmatrix}}_G 
   \begin{bmatrix}
   {x(t)}\\
   {\zeta (t)}\\
   {e(t)}
   \end{bmatrix}\bigg\}
\end{split}
\end{equation}
\begin{equation}
\begin{split}
\tilde J =& \mathop {\lim }\limits_{N \to \infty } \frac{1}{N}\sum\limits_{t = 1}^{N} {E\left\{ {tr\left( {{{\tilde x}^T}(t)G\tilde x(t)} \right)} \right\}} \\
=&\mathop {\lim }\limits_{N \to \infty } \frac{1}{N}\sum\limits_{t = 1}^{N} {E\left\{ {tr\left( {\tilde x(t){{\tilde x}^T}(t)G} \right)} \right\}} \\
=&\mathop {\lim }\limits_{N \to \infty } \frac{1}{N}\sum\limits_{t = 1}^{N} {tr\left( {G{\mathop{\rm cov}} (\tilde x(t))} \right)}  = tr\left( {G{\mathop{\rm cov}} (\tilde x)} \right)\,
\end{split}
\end{equation}
where according to (\ref{eq{26}}), $cov(\tilde{x})$ can be obtained by solving the following Lyapunov equation
\begin{equation}
{\mathop{\rm cov}} (\tilde x) = \Theta {\mathop{\rm cov}} (\tilde x){\Theta ^T} + E\{\Phi \Phi {\,^T} \}
\end{equation}
By taking the expectation values of the elements of $\Phi {\,^T}\Phi $ we get
\begin{equation}\label{eq{44}}
\scalebox{0.95}{$\begin{array}{l}
{\mathop{\rm cov}} (\tilde x) = \Theta {\mathop{\rm cov}} (\tilde x){\Theta ^T} + \\
\underbrace {\left[ {\begin{array}{*{20}{c}}
Q&0&{Q{{(I - LC)}^T}}\\
0&{\tilde {\rm M}R{{\tilde {\rm M}}^T}}&0\\
{(I - LC)Q}&0&{(I - LC)Q{{(I - LC)}^T} + LR{L^T}}
\end{array}} \right]}_\Psi 
\end{array}$}
\end{equation}
\subsection{Optimization Problem }
The optimal values for $\tilde {\rm M}$, $\tilde {\rm K}$ and $\tilde {\rm A}$ can be found from the following optimization problem
\begin{equation}\label{eq{45}}
\begin{array}{l}
\mathop {\min }\limits_{\tilde {\rm A},\,\,\tilde {\rm M},\,\tilde {\rm K}} tr\left( {G{\mathop{\rm cov}} (\tilde x)} \right)\\
st.\,\,\,\,{\mathop{\rm cov}} (\tilde x) = \Theta {\mathop{\rm cov}} (\tilde x){\Theta ^T} + \Psi \\
1 < \delta  \le \max \{ abs(eig(\Delta ))\} \\
\tilde {\rm A}:Schur\,\,and\,\,full\,rank
\end{array}
\end{equation}
\begin{Proposition}
The optimization problem (\ref{eq{45}}) is feasible.
\end{Proposition}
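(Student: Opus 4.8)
The plan is to exhibit an explicit feasible triple $(\tilde{\rm A},\tilde{\rm M},\tilde{\rm K})$, so that the feasible set is shown to be nonempty. The key structural observation is that the stability requirement and the instability requirement act on two \emph{different} matrices: the Lyapunov constraint and the objective depend only on $\Theta$, whose stability---by Proposition 1---is governed solely by whether $\tilde{\rm A}$ is Schur, whereas the detectability constraint $1<\delta\le\max\{abs(eig(\Delta))\}$ concerns $\Delta$, where the destabilization is produced by the off-diagonal coupling blocks $B\tilde{\rm K}$ and $-\tilde{\rm M}C(I-LC)$ rather than by $\tilde{\rm A}$ itself. Hence one can keep $\tilde{\rm A}$ Schur and full rank while still forcing $\Delta$ to have an eigenvalue of large modulus.

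First I would fix $\tilde{\rm A}$ to be any Schur full-rank matrix, e.g. $\tilde{\rm A}=aI$ with $0<a<1$; this immediately satisfies the last constraint and, by Proposition 1, makes $\Theta$ Schur. Because $\Theta$ is Schur and $\Psi\succeq 0$ (it is the covariance $E\{\Phi\Phi^T\}$), the discrete Lyapunov equation ${\mathop{\rm cov}}(\tilde x)=\Theta{\mathop{\rm cov}}(\tilde x)\Theta^T+\Psi$ admits a unique positive semidefinite solution, so the Lyapunov constraint is met and the objective $tr(G{\mathop{\rm cov}}(\tilde x))$ is finite and well defined for this choice. This disposes of the first and last constraints for \emph{any} admissible $\tilde{\rm M},\tilde{\rm K}$.

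Next I would satisfy the instability constraint using the construction from the proof of Theorem 2. By Remark 7 together with Proposition 3, the matrix $A$---and therefore $\Gamma$---is full rank, so the block-determinant identity used there applies and gives $\det(\Delta)=\det(\Gamma)\,\det(\tilde{\rm A}+\tilde{\rm M}C(I-LC)\Gamma^{-1}B\tilde{\rm K})$. Activating a single entry $m$ of $\tilde{\rm M}$ and a single entry $k$ of $\tilde{\rm K}$ makes this expression affine in the product $mk$ with a nonzero coefficient, so $|\det(\Delta)|$ can be driven arbitrarily large by scaling $m$ and $k$. Since $\Delta\in\mathbb{R}^{2n\times 2n}$ and $|\det(\Delta)|=\prod_i|\lambda_i(\Delta)|\le(\max_i|\lambda_i(\Delta)|)^{2n}$, we have $\max_i|\lambda_i(\Delta)|\ge|\det(\Delta)|^{1/(2n)}$; choosing $m,k$ so that $|\det(\Delta)|\ge\delta^{2n}$ forces $\max\{abs(eig(\Delta))\}\ge\delta$, meeting the detectability constraint for the prescribed $\delta>1$.

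The main obstacle is verifying that all three requirements hold \emph{simultaneously}: the same $\tilde{\rm A},\tilde{\rm M},\tilde{\rm K}$ must make $\Theta$ Schur yet $\Delta$ unstable. This is resolved precisely by the decoupling noted above---the instability is introduced only through the coupling gains $\tilde{\rm M}$ and $\tilde{\rm K}$, which do not enter the block-triangular matrix $\Theta$, so enlarging them to satisfy the eigenvalue bound never disturbs the solvability of the Lyapunov equation or the finiteness of the cost. A minor point to check is that the affine coefficient multiplying $mk$ is genuinely nonzero for the chosen sparsity pattern; this is guaranteed, as in the proof of Theorem 2, by selecting the active entries so that the relevant minor $\det(M_{is})$ of $\tilde{\rm A}$ is nonzero, which is possible since $\tilde{\rm A}$ is invertible. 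Assembling these facts shows that the feasible set contains the exhibited triple, which proves feasibility.
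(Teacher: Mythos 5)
Your proposal is correct and follows essentially the same route as the paper: invoke Proposition 1 to get $\Theta$ Schur (hence a unique PSD solution of the Lyapunov equation) and reuse the construction from Theorem 2 to violate stability of $\Delta$, exploiting that $\tilde{\rm M},\tilde{\rm K}$ do not affect $\Theta$. You are in fact slightly more careful than the paper, which asserts that Theorem 2 yields a maximum eigenvalue modulus exceeding any $\delta>1$ even though the theorem only states "greater than one"; your bound $\max_i|\lambda_i(\Delta)|\ge|\det(\Delta)|^{1/(2n)}$ supplies the missing quantitative step.
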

\begin{proof}
Since the matrix $\Theta$ is Schur, according to Proposition 1 and also due to the fact that the matrix $\Psi$ in the right hand side of (\ref{eq{44}}) is a non-negative definite matrix, there exists a unique non-negative definite solution to the Lyapunov equation (\ref{eq{44}}). On the other hand, in Theorem 2, we showed that one can adjust the absolute value of the maximum eigenvalue of the matrix $\Delta$ to become greater than any scalar $\delta >1$. Therefore, the feasible domain of the optimization problem (\ref{eq{45}}) is non-empty.
\end{proof}
It should be mentioned that the optimization problem (\ref{eq{45}}) is a non-convex nonlinear programming.  Nevertheless, there are many nonlinear programming methods for solving non-convex problems such as interior point, trust-region-reflective, etc. 

\section{SIMULATION RESULTS}
In this section, DC motor systems that are commonly used in cyber-physical systems are considered to verify the results of the paper. By assuming that the magnetic field is constant, the state space model of the system is LTI and can be represented as follows
\begin{equation}\label{eq{1276}}
\frac{d}{{dt}}\left[ {\begin{array}{*{20}{c}}
\theta \\
{\dot \theta }\\
i
\end{array}} \right] = \left[ {\begin{array}{*{20}{c}}
0&1&0\\
0&{ - \frac{b}{J}}&{\frac{K_t}{J}}\\
0&{ - \frac{K_b}{L}}&{ - \frac{R}{L}}
\end{array}} \right]\left[ {\begin{array}{*{20}{c}}
\theta \\
{\dot \theta }\\
i
\end{array}} \right] + \left[ {\begin{array}{*{20}{c}}
0\\
0\\
{\frac{1}{L}}
\end{array}} \right]V(t)
\end{equation}
\begin{equation}
y(t) = \left[ {\begin{array}{*{20}{c}}
1&0&0
\end{array}} \right]\left[ {\begin{array}{*{20}{c}}
\theta \\
{\dot \theta }\\
i
\end{array}} \right]
\end{equation}

The physical parameters for this example are given in Table 1. It is assumed that the model (\ref{eq{1276}}) is discretized with sampling time $T_s=.01s$. The cost matrices for this system, W and U, and covariance matrices, Q and R, are equal to the identity matrix with appropriate dimensions. The cost of LQG controller is $j^*=1.04$ and we set the scalar $\gamma$ in a way that $\alpha=.01$. We fix the window time $\mathscr{T}=1s$ (equal to 100 sample time). If the number of alarms occurred in this window time exceed the desired number of alarms $\beta$, then the attack will be revealed.  

It is assumed that the attacker records output measurements from time interval $[-15 \,\ 0]s$ and replays them exactly in time interval $[0 \,\ 15]s$.  
For this system, the matrix $\Gamma$ has three eigenvalues inside the unit circle. Consequently, the attack will remain stealthy and the defender has to use another countermeasure against the attack.
 
We have designed our proposed attack detector signal by solving the optimization problem (\ref{eq{45}}) for $\delta=1.03$. The design matrices are as follows
\begin{equation}
\begin{split}
\tilde {\rm A} &= \left[ {\begin{array}{*{20}{c}}
{.48}&{ - .81}&{.02}\\
{.01}&{.61}&{ - .92}\\
{.89}&{.73}&{ - .9}
\end{array}} \right],\quad \tilde {\rm M} = \left[ {\begin{array}{*{20}{c}}
{ - .84}\\
{ - .49}\\
{ - .81}
\end{array}} \right] \\
 \tilde {\rm K} &= \left[ {\begin{array}{*{20}{c}}
 {.9}&{ - .1}&{.35}
 \end{array}} \right]
  \end{split}
  \end{equation}
Since the system behaves stochastically, we use expected 
detection time (the average time that is needed
 to reach a desired number of alarms $\beta$)
 instead of detection time. In Fig. 2, we have compared the expected detection time in our proposed method with white Gaussian watermarking signal method in the case that performance loss is equal for both methods. Each individual curve is obtained by averaging on 1000 experiments.\\
In Fig. 2, it can be seen that our proposed method gives
lower expected detection time for all desired number of
alarms. However, for $\beta\ge5$ the White Gaussian is not capable of detecting the presence of attack but the proposed attack detector method can even reveal the attack for the most achievable number of alarms $\beta_{max}=100$.

\begin{figure}[ht]
\centerline{\includegraphics[width=8.5cm]{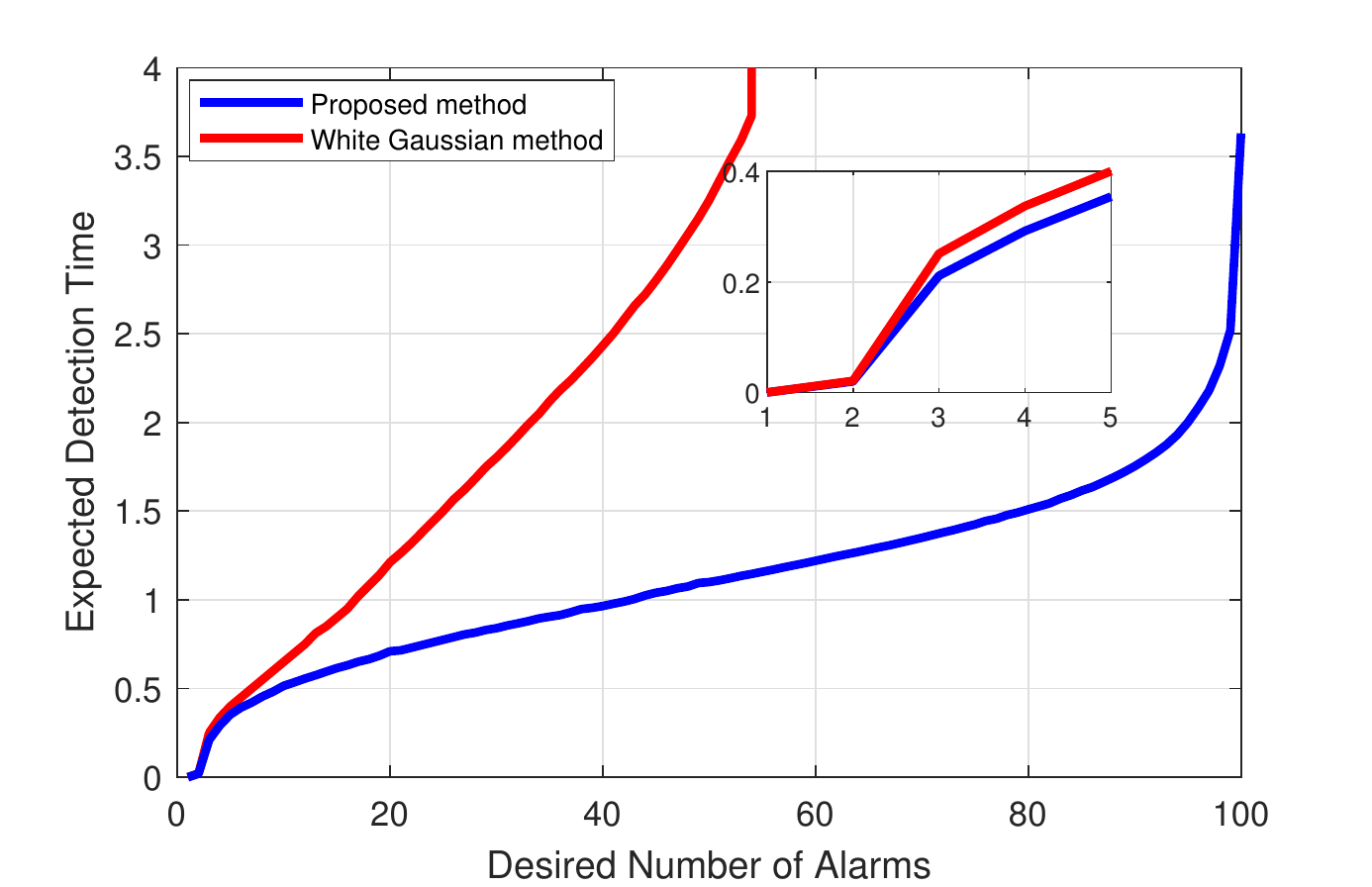}}
\caption{\small Expected detection time versus number of alarms for our proposed method with $\delta=1.03$ and white Gaussian watermarking signal for the same performance loss.}
\end{figure}

\begin{table}[h!]
  \begin{center}
    \caption{Physical parameters of DC motor}
    \label{tab:table1}
    \begin{tabular}{c|c} 
      Parameter & Value \\
      \hline
      
      $b$ & $3.5 \times {10^{ - 6}}\,\, N.m.s$\\
      $J$ & $3.23 \times {10^{-6}}\,\, kg.m^2$\\
      $L$ & 2.75 $\times {10^{-6}}\,\, H$\\
      $K_b$ & $0.0274\,\, V/rad/sec $\\
      $K_t$ & $0.0274\,\,  N.m/Amp$ \\
      $R$ &  $4\,\,\, Ohm$
    \end{tabular}
  \end{center}
\end{table}

\begin{figure}[ht]
\centerline{\includegraphics[width=8.5cm]{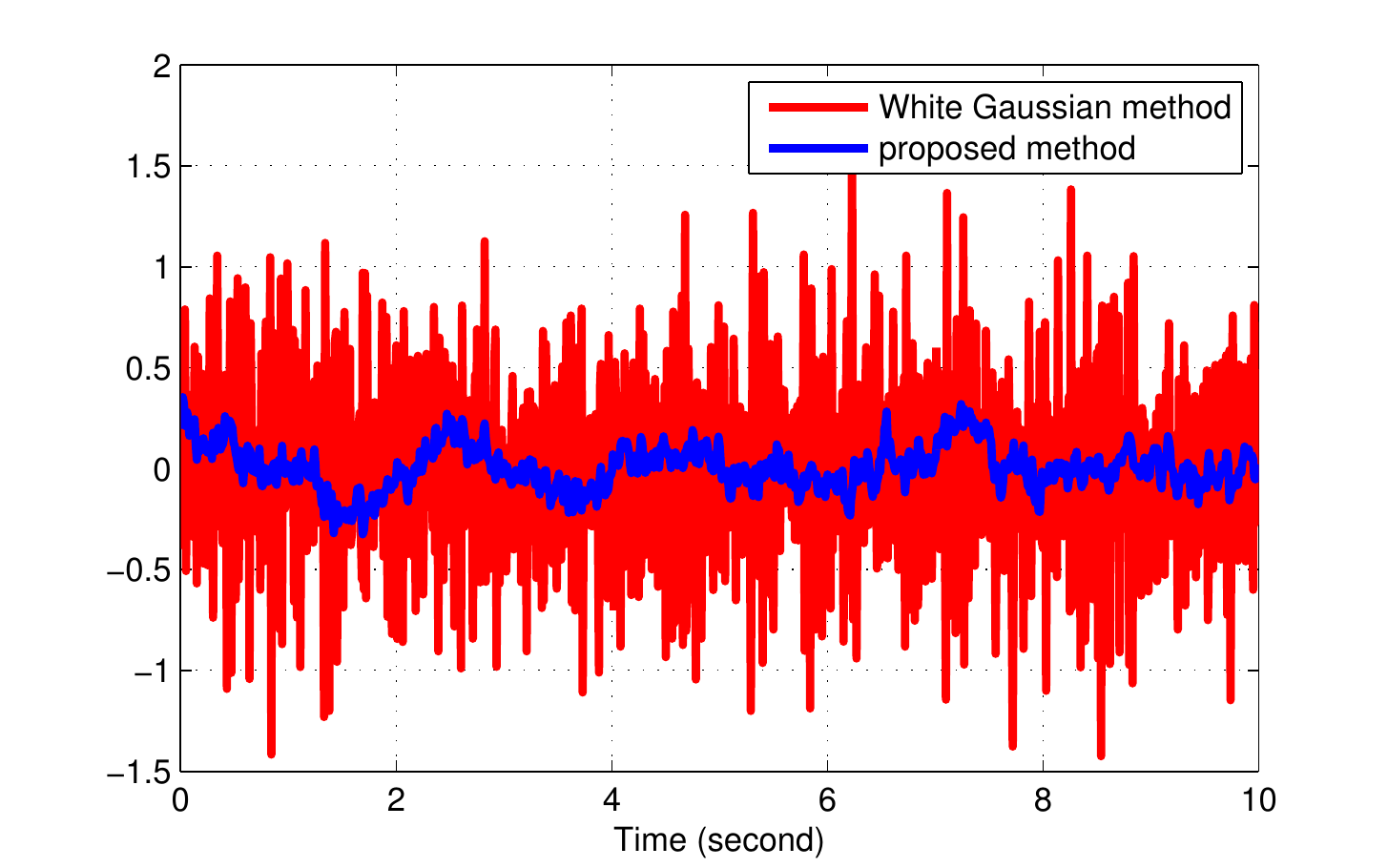}}
\caption{\small Control signal after injection of our proposed (blue line) and watermarking signal (red line).}
\end{figure}

\begin{figure}[ht]
\centerline{\includegraphics[width=8.5cm]{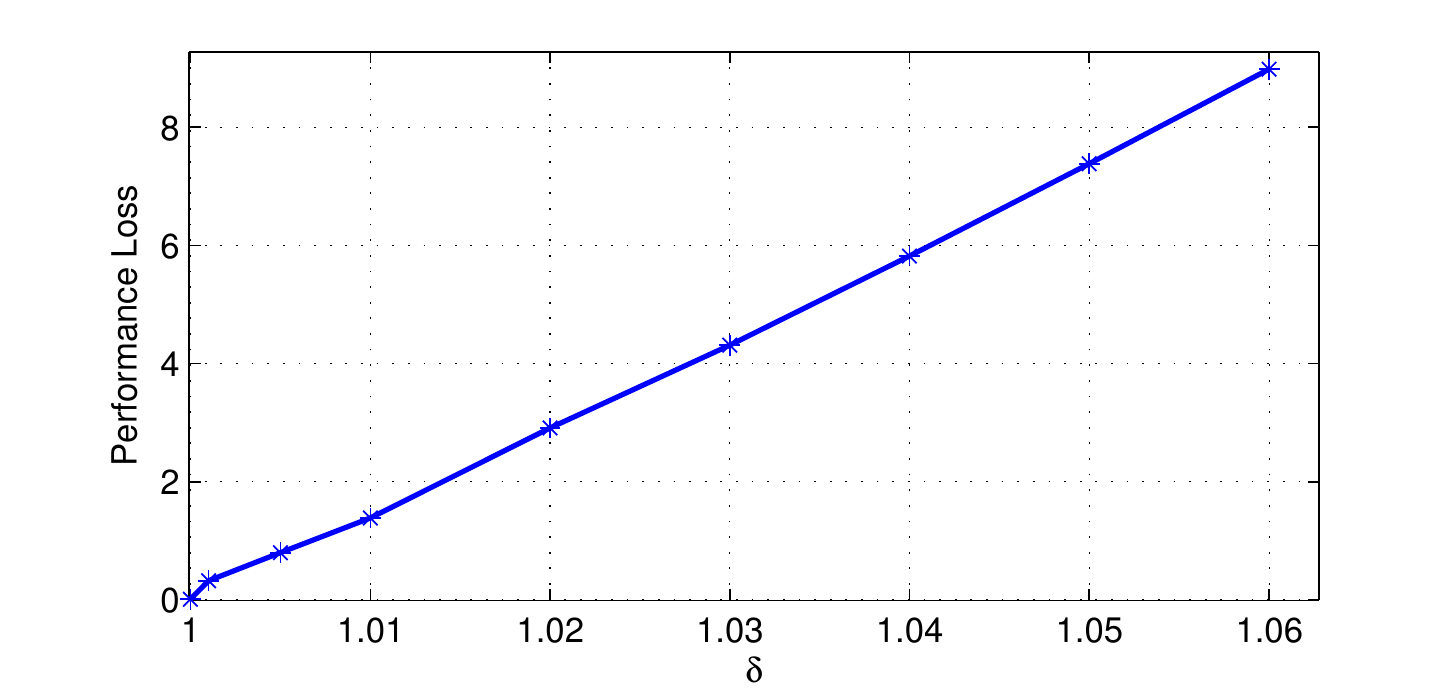}}
\caption{\small  The amount of performance loss versus $\delta$. By increasing $\delta$ the amount of performance loss will also increase. }
\end{figure}

\newcounter{tempEquationCount} 
\setcounter{MYtempeqncnt}{\value{equation}}\addtocounter{equation}{1}

In Fig. 3 we compare the control signal of our proposed method with control signal of white Gaussian watermarking method in the case that they give the same detection time ($.5s$) for the same desired number of alarms $\beta=30$. As it can be seen, the control signal in our proposed attack detector is smoother, has smaller range and lower energy with respect to the control signal in white Gaussian noise method.

Moreover, Fig. 4 examines the relation between the amount of performance loss and maximum eigenvalue of detection matrix $\Delta$. It is clear that by increasing $\delta$ the amount of performance loss  increases as well.

\end{document}